\newcommand{\BWT}{\ensuremath{\mathrm{BWT}}}
\newcommand{\SA}{\ensuremath{\mathrm{SA}}}
\newcommand{\LF}{\ensuremath{\mathrm{LF}}}
\newcommand{\pred}{\ensuremath{\mathrm{pred}}}
\newcommand{\rank}{\ensuremath{\mathrm{rank}}}
\newcommand{\select}{\ensuremath{\mathrm{select}}}
\newcommand{\floor}[1]{\left\lfloor #1 \right\rfloor}
\title{RLBWT Tricks}
\author{Nathaniel K. Brown}{Faculty of Computer Science, Dalhousie University, NS, Canada}{nathaniel.brown@dal.ca}{https://orcid.org/0000-0002-6201-2301}{}
\author{Travis Gagie}{Faculty of Computer Science, Dalhousie University, NS, Canada}{travis.gagie@dal.ca}{https://orcid.org/0000-0003-3689-327X}{}
\author{Massimiliano Rossi}{Department of Computer and Information Science and Engineering, University of Florida, FL, USA}{rossi.m@ufl.edu}{https://orcid.org/0000-0002-3012-1394}{}
\authorrunning{N.\,K. Brown, T. Gagie and M. Rossi}
\keywords{Compressed String Indexes, Repetitive Text Collections, Burrows-Wheeler Transform}
\begin{document}

\maketitle

\begin{abstract}
Until recently, most experts would probably have agreed we cannot backwards-step in constant time with a run-length compressed Burrows-Wheeler Transform (RLBWT), since doing so relies on rank queries on sparse bitvectors and those inherit lower bounds from predecessor queries.  At ICALP '21, however, Nishimoto and Tabei described a new, simple and constant-time implementation. For a permutation $\pi$, it stores an $O (r)$-space table --- where $r$ is the number of positions $i$ where either $i = 0$ or $\pi (i + 1) \neq \pi (i) + 1$ --- that enables the computation of successive values of $\pi(i)$ by table look-ups and linear scans. Nishimoto and Tabei showed how to increase the number of rows in the table to bound the length of the linear scans such that the query time for computing $\pi(i)$ is constant while maintaining $O (r)$-space. 

In this paper we refine Nishimoto and Tabei's approach, including a time-space tradeoff, and experimentally evaluate different implementations demonstrating the practicality of part of their result. We show that even without adding rows to the table, in practice we almost always scan only a few entries during queries. We propose a decomposition scheme of the permutation $\pi$ corresponding to the LF-mapping that allows an improved compression of the data structure, while limiting the query time. We tested our implementation on real-world genomic datasets and found that without compression of the table, backward-stepping is drastically faster than with sparse bitvector implementations but, unfortunately, also uses drastically more space.  After compression, backward-stepping is competitive both in time and space with the best existing implementations.
\end{abstract}

\section{Introduction}
\label{sec:introduction}

The FM-index~\cite{FM05} is the basis for key tools in computational genomics, such as the popular short-read aligners BWA~\cite{LD09} and Bowtie~\cite{LTPS09}, and is probably now the most important application of the Burrows-Wheeler Transform (\BWT)~\cite{BW94}.  As genomic databases have grown and researchers and clinicians have realized that using only one or a few reference genomes biases their results and diagnoses, interest in computational pan-genomics has surged and versions of the FM-index based on the run-length compressed \BWT\ (RLBWT)~\cite{MNSV10} have been developed that can index thousands of genomes in reasonable space~\cite{GNP20,KMBGLM20,ROLGB21}.  Those versions have all relied heavily on compressed sparse bitvectors, however, which are inherently slower than the bitvectors used in regular FM-indexes (see~\cite{Nav16} for details).  Experts would probably have guessed that sparse bitvectors were an essential component for RLBWT-based pan-genomic indexes --- until Nishimoto and Tabei~\cite{NT21} recently showed how to replace them with theoretically more efficient alternatives.

In particular, Nishimoto and Tabei's result gives an approach which achieves constant time \LF-mapping in $O(r)$-space \cite{NT21}. Speeding up \LF\ can reduce the time for basic queries over the RLBWT and other applications. For example, Ahmed et al.'s SPUMONI~\cite{Ahmed21} tool allows rapid targeted nanopore sequencing over compressed pan-genome indexes using approximate matching statistics; ``nontarget'' DNA molecules are ejected from the sequencer with an emphasis on speed. Their method depends on LF-mapping to extend matches, and otherwise ``jumping'' forwards or backwards in the \BWT\ based on threshold computation. Thresholds over the \BWT\ is a rather new approach, introduced by Bannai et al. in 2020 \cite{Bannai20}, suggesting further improvements may be developed; however, avoiding the lower bounds inherited from predecessor queries from rank on sparse bitvectors\footnote{Conventionally, \LF-mapping in runs bounded space relies on rank queries over sparse bitvectors.} is a more surprising result. For tools that heavily depend on \LF, experiments showing practical results provide an opportunity for speed improvements that otherwise would not have been expected to be attainable.

In this paper we focus on the first part of Nishimoto and Tabei's result: we demonstrate experimentally that we can reduce the time for basic queries on an RLBWT by replacing queries on sparse bitvectors by table lookups, sequential scans, and queries on relatively short uncompressed bitvectors. We implement \LF-mapping over the RLBWT using table lookup; preliminary results showed this could be made practical even without theoretical worst case time guarantees. Although their result also applies to the $\phi$ function over the RLBWT \cite{NT21}, we focus on \LF\ since it allows backward-stepping (performed before locating, which requires $\phi$) and its seems more compressible for \LF; we leverage the unique structure of \LF\ to partition columns of the table into non-decreasing subsequences.

With this motivation, we present various techniques and optimizations towards a practical implementation. To demonstrate its practicality, we use real-world genomic datasets to perform count queries using haplotypes of chromosome 19 and  SARS-CoV2 genomes. We find that our implementations are competitive in time/space with the best existing methods: in the average case without row insertions, and exploring a run splitting approach to loosely bound sequential scanning in the worst case. Further analysis shows in practice, sequential scans are quite rare, but can become more common as $n/r$ grows, motivating our run splitting and further approaches.

The rest of this paper is laid out as follows: in Section~\ref{sec:NT21} we present the two parts of Nishimoto and Tabei's result and explain how they relate to RLBWT-based pan-genomic indexes; Section~\ref{sec:implementation} describes methods used to make the result practical for implementation; in Section~\ref{sec:experiments} we present our experimental results; and in Section~\ref{sec:discussion} we analyse its practicality and summarize findings.

\section{Nishimoto and Tabei's Result}
\label{sec:NT21}

Suppose we want to compactly store a permutation $\pi$ on $\{0, \ldots, n - 1\}$ such that we can evaluate $\pi (i)$ quickly when given $i$.  If $\pi$ is chosen arbitrarily then $\Theta (n)$ space is necessary to store it in the worst case, and sufficient to allow constant-time evaluation.  If the sequence $\pi (0), \pi (1), \pi (2), \ldots, \pi (n - 1)$ consists of a relatively small number $b$ of unbroken incrementing subsequences, however --- meaning $\pi (i + 1) = \pi (i) + 1$ whenever $\pi (i)$ and $\pi (i + 1)$ are in the same subsequence --- then we can store $\pi$ in $O (b)$ space and evaluate it in $O (\log \log n)$ time.  To do this, we simply store in an $O (b)$-space predecessor data structure with $O (\log \log n)$ query time --- such as a compressed sparse bitvector --- each value $i$ such that $\pi (i)$ is the head of one of those subsequences, with $\pi (i)$ as satellite data; we evaluate any $\pi (i)$ in $O (\log \log n)$ time as
$$\pi (i) = \pi (\pred (i)) + i - \pred (i)\,.$$

Nishimoto and Tabei first proposed a simple alternative $O (b)$-space representation:\footnote{We may have taken some artistic license with their format.} we store a sorted table in which, for each subsequence head $p$, there is quadruple: $p$; the length of the subsequence starting with $p$; $\pi (p)$; and the index of the subsequence containing $\pi (p)$.

If we know the index of the subsequence containing $i$ then we can look up the quadruple for that subsequence and find both its head $p$ and $\pi (p)$, then compute $\pi (i) = \pi (p) + i - p$ in constant time.  If we want to compute $\pi^2 (i)$ the same way, however, we should compute the index of the subsequence containing $\pi (i)$, since $\pi (i)$ may be in a later subsequence than $\pi (p)$.  To do this, we look up the quadruple for the subsequence containing $\pi (p)$ (which takes constant time since we have its index) and find its head and length, from which we can tell if $\pi (i)$ is in the subsequence.  If it is not, we continue reading and checking the quadruples for the following subsequences (which takes constant time for each one, since they are next in the table) until we find the one that does contain $\pi (i)$.

Sequentially scanning the table to find the quadruple for the subsequence containing $\pi (i)$ could take $\Omega (b)$ time in the worst case, so Nishimoto and Tabei then proved the following result, which implies we can artificially divide some of the subsequences before building the table, such that all the sequential scans are short.  We still find their proof surprising, so we have included a summary of it below which introduces our parameter $d$. This refinement of the original theorem allows for a time/space tradeoff.

\begin{theorem}[Nishimoto and Tabei~\cite{NT21}]
\label{thm:NT21}
Let $\pi$ be a permutation on $\{0, \ldots, n - 1\}$,
$$P = \{0\} \cup \{i\ :\ 0 < i \leq n - 1, \pi (i) \neq \pi (i - 1) + 1\}\,,$$
and $Q = \{\pi (i)\ :\ i \in P\}$.  For any integer $d \geq 2$, we can construct $P'$ with $P \subseteq P' \subseteq \{0, \ldots, n - 1\}$ and $Q' = \{\pi (i)\ :\ i \in P'\}$ such that
\begin{itemize}
    \item if $q, q' \in Q'$ and $q$ is the predecessor of $q'$ in $Q'$, then $|[q, q') \cap P'| < 2 d$,
    \item $|P'| \leq \frac{d |P|}{d - 1}$.
\end{itemize}
\end{theorem}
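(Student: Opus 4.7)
The plan is to construct $P'$ incrementally and greedily. Set $P' := P$ (so $Q' = Q$), and while some $Q'$-interval $[q, q')$ contains $k \geq 2d$ elements of $P'$, enumerate these elements in increasing order as $j_1 < \cdots < j_k$ and insert $i^\star := \pi^{-1}(j_d)$ into $P'$. Because no element of $Q'$ lies strictly between $q$ and $q'$, the value $j_d \in (q, q')$ is fresh in $Q'$ and $i^\star$ is a genuine new element; adding it introduces $j_d$ into $Q'$ and splits $[q, q')$ into $[q, j_d)$, containing the ``trapped'' elements $j_1, \dots, j_{d-1}$ (plus possibly $i^\star$), and $[j_d, q')$, containing $j_d, \dots, j_k$ (plus possibly $i^\star$). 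The first bullet of the theorem is then immediate at termination.

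For the second bullet I would use an amortized potential $\Phi(P') := \sum_I \max\bigl(0, |I \cap P'| - (d-1)\bigr)$ summed over $Q'$-intervals $I$. One checks that $\Phi(P) \leq |P|$, since each element of $P$ contributes at most $1$ to the sum. A short computation then shows that each iteration drops the bad interval's contribution from $k - (d-1)$ down to $\epsilon_1 + (k - 2d + 2 + \epsilon_2)$, where $\epsilon_1, \epsilon_2 \in \{0,1\}$ indicate whether the new element lands in the left or the right sub-interval; meanwhile the new element raises $\Phi$ by at most $1$ in whichever $Q'$-interval receives it. The net drop in $\Phi$ is therefore at least $d - 2$ per iteration.

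The main obstacle is sharpening this to yield the tight coefficient $1/(d-1)$ rather than $1/(d-2)$, and in particular handling the boundary case $d = 2$ where the crude potential is flat. The tight bound really asks for a refined block charging: each insertion should be pinned to a disjoint block of $d-1$ original $P$-elements trapped in its left sub-interval $[q, j_d)$. Achieving this requires carefully ordering the iterations so that a sub-interval which is safe immediately after its split is never re-ignited by a later insertion landing inside it --- for instance by always splitting the currently innermost bad interval, or by processing the splits top-down according to a recursive decomposition of the $Q'$-intervals. Verifying that such an ordering exists and that the charged blocks are pairwise disjoint across all iterations is the delicate step I would expect to be the main technical work of the proof.
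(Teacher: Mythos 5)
Your construction is essentially the paper's: greedily split any interval between consecutive $Q'$-values containing at least $2d$ elements of $P'$ by inserting the preimage of one of those elements, and observe that the first bullet holds trivially at termination. The gap is entirely in the second bullet. Your potential $\Phi(P') = \sum_I \max\bigl(0, |I \cap P'| - (d-1)\bigr)$ drops by only $d-2$ per iteration, which yields $|P'| \leq \frac{(d-1)|P|}{d-2}$ rather than $\frac{d|P|}{d-1}$ and gives nothing at all for $d=2$ --- the smallest and most useful value of the parameter. You correctly identify this, but the repair you sketch (a careful iteration ordering, plus disjoint blocks of \emph{original} $P$-elements pinned to each insertion, with a worry about sub-intervals being ``re-ignited'') is not carried out and is also not the right tool: no ordering constraint is needed, and the elements charged need not come from the original $P$.

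The missing idea is to count \emph{heavy intervals} rather than to measure excess. First, split at the $(d+1)$st smallest element $p$ of $[q,q') \cap P_i$ (not the $d$th, as you propose --- with your choice the left piece $[q, j_d)$ traps only $d-1$ elements and the argument below breaks). Then $[q,p)$ contains exactly $d$ elements of $P_i$ and $[p,q')$ at least $d$. Let $E_i$ be the set of intervals between consecutive elements of $Q_i$ containing at least $d$ elements of $P_i$. Each iteration replaces one member of $E_i$ (the split interval had $\geq 2d \geq d$ elements) by two members of $E_{i+1}$, and no heavy interval ever stops being heavy, so $|E_{i+1}| \geq |E_i| + 1 \geq i+1$ regardless of the order in which bad intervals are processed. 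Since the intervals in $E_{i+1}$ are pairwise disjoint and each contains at least $d$ elements of $P_{i+1}$, and since $|P_{i+1}| = |P| + i + 1$ (each inserted preimage is genuinely new, as you argue), we get $|P| + i + 1 \geq d(i+1)$, hence $i+1 \leq \frac{|P|}{d-1}$ and $|P'| \leq \frac{d|P|}{d-1}$. Disjointness of the charged element sets is automatic from disjointness of the intervals, which is exactly the step your block-charging plan was struggling to secure.
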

\begin{proof}
We start by setting $P_0 = P$ and $Q_0 = Q$.  Suppose at some point we have $P_i$ and $Q_i = \{\pi (i)\ :\ i \in P_i\}$.  If there do not exist $q, q' \in Q_i$ such that $q$ is the predecessor of $q'$ in $Q_i$ and $|[q, q') \cap P_i| \geq 2 d$, then we stop and return $P' = P_i$ and $Q' = Q_i$; otherwise, we choose some such $q$ and $q'$.

We choose the $(d + 1)$st largest element $p$ in $[q, q') \cap P_i$ and set $P_{i + 1} = P_i \cup \{\pi^{- 1} (p)\}$ and $Q_{i + 1} = Q_i \cup \{p\} = \{\pi (i)\ :\ i \in P_{i + 1}\}$.  Since $q < p < q'$ we have $p \not \in Q_i$ and so $\pi^{- 1} (p) \not \in P_i$.  Therefore, $|P_{i + 1}| = |P_i| + 1$ and so, by induction, $|P_{i + 1}| = |P| + i + 1$.

Let $E_i$ be the set of intervals $[u, u')$ such that $u, u' \in Q_i$ and $u$ is the predecessor of $u'$ in $Q_i$ and $|[u, u') \cap P_i| \geq d$, and let $E_{i + 1}$ be the set of intervals $[u, u')$ such that $u, u' \in Q_{i + 1}$ and $u$ is the predecessor of $u'$ in $Q_{i + 1}$ and $|[u, u') \cap P_{i + 1}| \geq d$.  Since $E_{i + 1} = (E_i \backslash \{[q, q')\}) \cup \{[q, p), [p, q')\}$, we have $|E_{i + 1}| = |E_i| + 1$ and, by induction, $|E_{i + 1}| \geq i + 1$.

Since the intervals in $E_{i + 1}$ are disjoint and each contain at least $d$ elements of $P_{i + 1}$, we have $|P_{i + 1}| \geq d |E_{i + 1}| \geq d (i + 1)$.  Since $|P_{i + 1}| = |P| + i + 1$ and $|P_{i + 1}| \geq d (i + 1)$, we have $|P| + i + 1 \geq d (i + 1)$ and thus $i + 1 \leq \frac{|P|}{d - 1}$ and $|P_{i + 1}| = |P| + i + 1 \leq \frac{d |P|}{d - 1}$.  It follows that we find $P'$ and $Q'$ after at most $\frac{|P|}{d - 1}$ steps.
\end{proof}

To discuss how Theorem~\ref{thm:NT21} relates to RLBWTs, we first recall the definitions of the suffix array (\SA), the \BWT, the \LF\ mapping and $\phi$ for a text $T [0..n - 1]$:
\begin{itemize}
\item $\SA [i]$ is the starting position of the lexicographically $i$th suffix of $T$;
\item $\BWT [i]$ is the character immediately preceding that suffix;
\item $\LF (i)$ is the position of $\SA [i] - 1$ in \SA;
\item $\phi (i)$ is the value that precedes $i$ in \SA.
\end{itemize}
Let $T$ be defined over an alphabet $\Sigma$ of size $\sigma$. For convenience we assume $T$ ends with a special symbol $T [n - 1] = \mathtt{\$}$ that occurs nowhere else, we consider strings and arrays as cyclic and we work modulo $n$.

It is not difficult to see that \LF\ and $\phi$ (and thus also $\phi^{- 1}$) are permutations that can be divided into at most $r$ of unbroken incrementing subsequences, where $r$ is the number of runs in the \BWT.\footnote{Realizing this about $\phi$, however, led directly to Gagie, Navarro and Prezza's $r$-index~\cite{GNP20}.}  First, if $\BWT [i] = \BWT [i + 1]$ then $\LF (i + 1) = \LF (i) + 1$, so there are at most $r$ values for which $\LF (i + 1) \neq \LF (i) + 1$.  Second, if $\BWT [i] = \BWT [i + 1]$ so $\LF (i + 1) = \LF (i) + 1$ then
$$\SA [\LF (i)] = \phi (\SA [\LF (i + 1)])$$
and, as illustrated in Figure~\ref{fig:equality},
$$\phi (\SA [i + 1])
= \SA [i]
= \SA [\LF (i)] + 1
= \phi (\SA [\LF (i + 1)]) + 1
= \phi (\SA [i + 1] - 1) + 1$$
or, choosing $i' = \SA [i + 1] - 1$, we have $\phi (i' + 1) = \phi (i') + 1$.  It follows that there are at most $r$ values for which $\phi (i' + 1) \neq \phi (i') + 1$.  Nishimoto and Tabei's result therefore gives us $O (r)$-space data structures supporting \LF, $\phi$ and $\phi^{- 1}$ in constant time.

\begin{figure}[t]
\begin{center}
\includegraphics[width=0.5\textwidth]{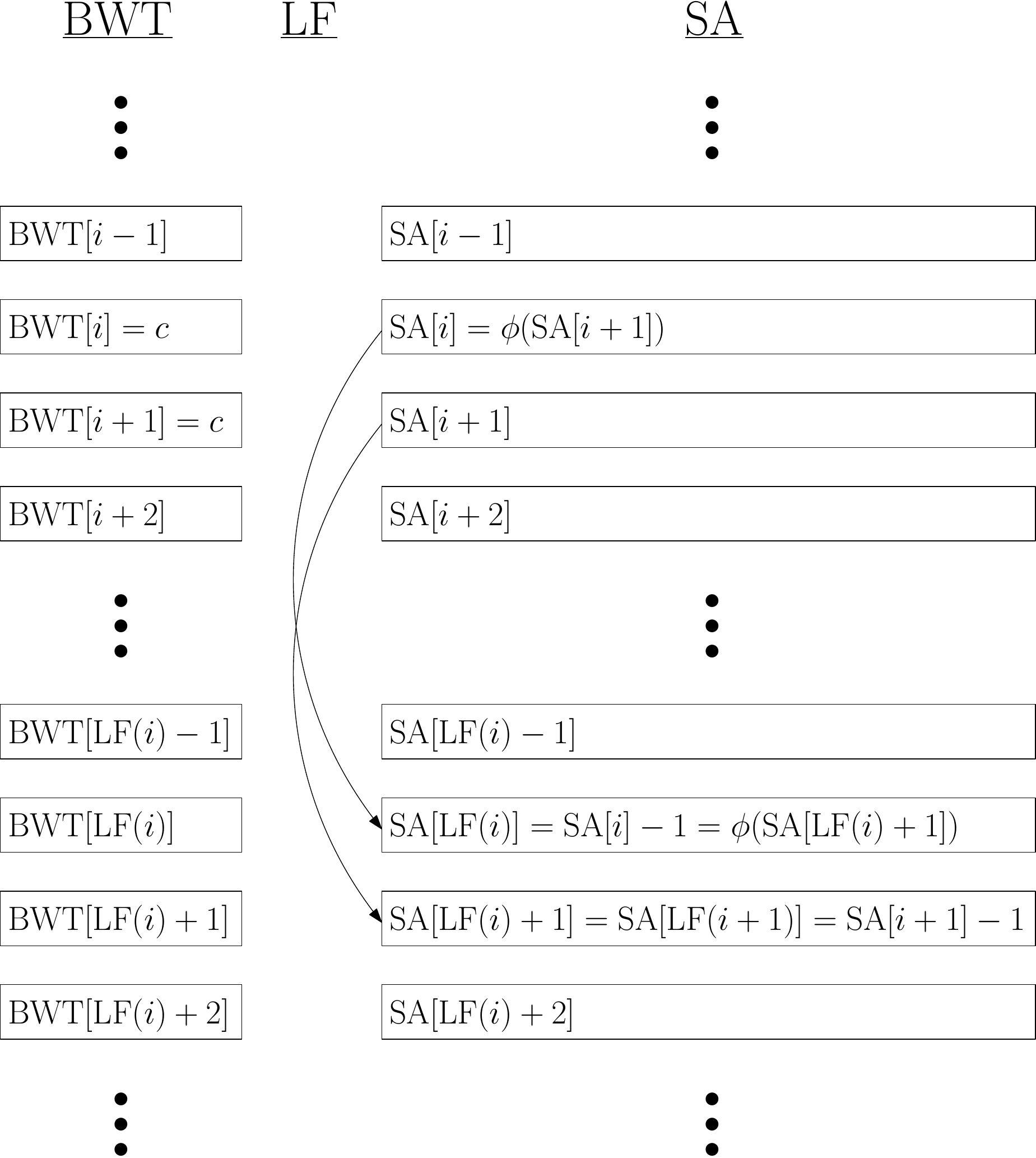}
\caption{An illustration of why $\BWT [i] = \BWT [i + 1]$ implies $\phi (\SA [i + 1]) = \phi (\SA [i + 1] - 1) + 1$.}
\label{fig:equality}
\end{center}
\end{figure}

As a practical aside we note that, although applying Theorem~\ref{thm:NT21} means we store quadruples for sub-runs in the \BWT, we can store with them the indexes of the maximal runs containing them and thus, for example, store \SA\ samples in an $r$-index only at the boundaries of maximal runs and not sub-runs.

The queries needed for most RLBWT-based pan-genomic indexes\footnote{For example, for the recent pan-genomic index MONI~\cite{ROLGB21}, we need \LF, $\phi$, $\phi^{- 1}$ and access to so-called thresholds.  A threshold for a consecutive pair of runs of the same character in \BWT\ is a position of a minimum LCP value in the interval between those runs.  If we know the index of the run containing a particular character $\BWT [i]$ and its offset in that run, and we want to know whether it is before or after the threshold for the pair of runs of another character $c$ bracketing $\BWT [i]$, then we can find in $O (\log \sigma)$ time the index of the preceding run of $c$s; if we have the index of the run containing the threshold and its offset in that run stored with that preceding run of $c$s, then we can tell immediately if $\BWT [i]$ is before or after the threshold.} can be implemented using \LF, $\phi$, $\phi^{- 1}$ and access, \rank\ and \select\ queries on the string $R [0..r - 1]$ in which $R [i]$ is the distinct character appearing in the $i$th run in \BWT, which can be supported with a wavelet tree on $R$.  Of course that wavelet tree uses bitvectors, but even with uncompressed bitvectors it takes only $r \lg \sigma + o (r \log \sigma)$ bits, where $\sigma$ is the size of the alphabet (usually 4 for genomics and pan-genomics), and supports those queries in $O (\log \sigma)$ time (or constant time when $\sigma = \log^{O (1)} n$).

\section{Practical Approach}
\label{sec:implementation}

To provide a practical implementation of Nishimoto and Tabei's first result, we slightly modify the structure of the table. Consider the permutation to be \LF(i)\ over the \BWT, with runs being unary substrings of the \BWT. In Section~\ref{sec:NT21} we presented the quadruples using absolute indexes over the permutation, but we can instead perform access using the run index itself: let positions of run heads in the \BWT ~be the array $I[0..r-1]$ storing the sorted values $i$ such that $i = 0$ or $\LF (i - 1) \neq \LF (i) - 1$. For all $k\in \{0,1,\dots,|I|-1\}$ we store a triple containing: the length of the run, i.e. $I[k+1] - I[k]$, where $I[r] = n$; the index of the run containing $\LF(I[k])$, i.e., $\max \{j \mid I[j] \leq \LF(I[k])\} $; and the offset $d$ of $\LF(I[k])$ in that run. Let $j$ be a position in the $k$-th run, the offset of $\LF (j)$ and $\LF (I[k])$ is $I[k] - j$, hence we can find the correct run containing $\LF (j)$ and its offset in that run using a sequential scan as described in Section~\ref{sec:NT21}. With this approach, we can represent positions in the \BWT~as run/offset pairs and implement \LF\ accordingly, i.e. $(k',d') \gets \LF (k, d)$. This change removes the need for the $p$ column of the table, with successive \LF\ steps  performed using the returned run/offset pair; access row $k'$ with offset $d'$ and perform $\LF (k', d')$.

\subsection{Block Compression}
\label{sec:implementation:block}

For each row on the previous representation of the \BWT, we store the character of the run corresponding to the row to enable support of count and inversion queries. Figure~\ref{fig:table_example} shows an example of this uncompressed table. Preliminary results showed that left uncompressed, \LF-mapping could be made drastically faster than a sparse bitvector implementation (seen in Section~\ref{sec:experiments} as {\texttt {rle-string}}) for inversion or \LF\ queries. However, the result is also drastically larger; this formulation is not practical because it requires storing three integers and one character for each run, and to perform count operations, it requires scanning the run heads to find the preceding and following run of the character we are seeking. One first improvement is to store the array $R[0..r-1]$ in a wavelet-tree as described in Section~\ref{sec:NT21}, which supports \rank\ and \select\ queries to efficiently find the preceding and following run of a given character. 

\begin{figure}[ht]
\begin{center}
\includegraphics[width=0.8\textwidth]{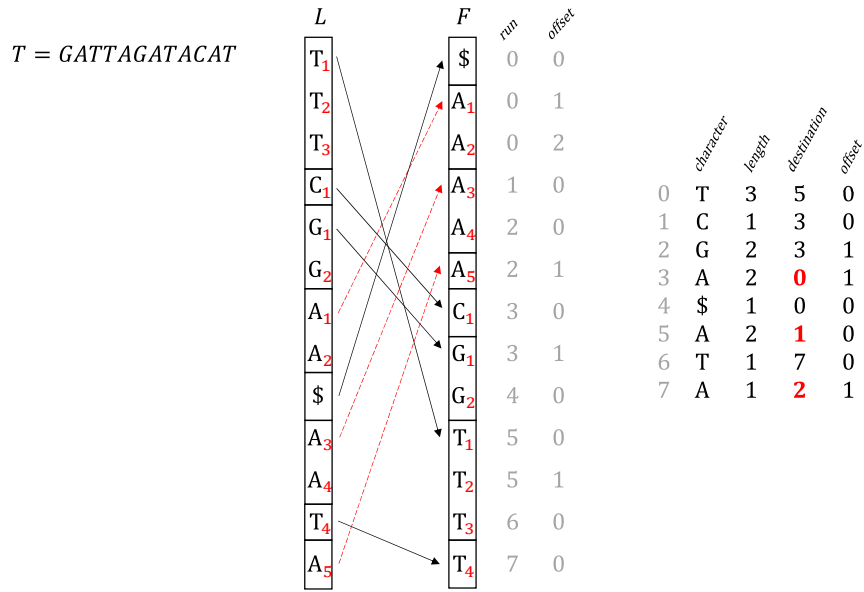}
\caption{For an example text $T = GATTAGATACAT$, the \LF\ mapping and subsequent uncompressed table is built (with appended terminal character \$). The run/offset columns show positions with respect to the L column used to find a mappings predecessor. Notice that highlighted stored mappings (destinations) for any run of $A$s form a non-decreasing subsequence.}
\label{fig:table_example}
\end{center}
\end{figure}

The tabular approach exploits space locality of the entries that facilitate the linear scans required by the algorithm when accessing rows sequentially; however, there is no apparent relationship which makes row-wise compression easy. To mitigate locality concerns, we partition the table into blocks of size $B$ which are loaded in a cache friendly manner. Using a fixed $B$, we can easily perform modular arithmetic to map positions within the blocks. For each block we store the corresponding character of each run in a wavelet tree that allows fast \rank\ and \select\ queries inside the block (using uncompressed bitvectors). For each character $c$ of the alphabet, the position of the first run of $c$'s preceding the beginning of the block and following the end of the block is stored, allowing efficient retrieval of these characters' correct rows when they are not stored in the wavelet tree and occur in another block. For example, we may need to look to another block if some character has no occurrences in the current block, or has no occurrences before/after some position.

To improve compression inside the block, we compress the lists of lengths and offsets using directly-addressable codes (DACs, see~\cite{Nav16}); we divide the list of run indices into $\sigma$ sub-lists, each containing the indices from rows corresponding to runs of a distinct character $c \in \Sigma$. Compressing the lengths and offsets in DACs is naive compression\footnote{DACs are a simple method to allow both random access alongside compression; however, more specific techniques would be preferred if these columns have exploitable properties that we could not uncover.} leveraging the length of a value's bit representation while also supporting random access. For mapping destinations, it follows from \LF\ that the mapping indices across a common character $c$ form a non-decreasing sub-sequence \cite{BW94} as highlighted in Figure~\ref{fig:table_example}. If we store in a block, for each of the $\sigma$ sub-lists, the mapping index of the first occurrence in the list, then the rest of the list can be truncated as a difference from the base mapping. We can also choose to represent the sub-lists by partial differences; for $m$ occurrences of a character $c$ let $M[0..m-1]$ be such a sub-list where we explicitly store the first mapping $M[0]$, and represent the list as $D[0] = 0, D[i] = M[i] - M[i-1]$. Storing only partial differences allows us to recover the mapping using prefix sum, which we expand upon in Section~\ref{sec:implementation:opt} alongside an approach over absolute offsets from the base. To manoeuvre around our positional change to run indices, we also store a sparse bitvector marking sampled run head positions in the \BWT, which is used after backwards-stepping to recover the absolute index from a run/offset pair.\footnote{Although we introduce a sparse bitvector into our data structure, it is not used during sequential \LF\ stepping, but rather as an ``exit'' or ``entrance'' from the table's run/offset pairs.}

\subsection{Optimizations}
\label{sec:implementation:opt}
Compressing the mapping column as ``difference lists'' gives various representations of exploiting the $\sigma$ non-decreasing sub-sequences: 
\begin{description}
\item[DAC Sampling]
By storing the partial differences space efficiently and sampling the absolute difference from the base, the number of random accesses needed to recover the correct value is bounded when computing the prefix sum. Implementing the approach using DACs to store the partial differences, we have a first method to retrieve mappings in compressed space while avoiding a costly traversal of the entire list. Although basic, this method is a simple choice to illustrate how we can leverage these sequences being non-decreasing.
\item[Linear Interpolation]
We perform linear interpolation between sampled offsets (as opposed to partial differences); with a sample rate $s$, prior sample $x$, next sample $z$, and unsampled difference $y$ at position $i$. For each $y$, we then store its difference $\Delta = y - \epsilon$ from a weighted average defined by $$\epsilon = x + (z - x) \cdot (i - s\cdot \left( \floor{i/s})/s\right)$$ into a DAC. \footnote{We store a bitvector denoting the sign of the stored component, allowing us to compress unsigned integers using the DAC.} Given $i$ and $s$, we lookup $x$ and $z$ to compute $\epsilon$, after which we compute $(y - \epsilon) + \epsilon = y$ from our stored value $\Delta = y - \epsilon$ to recover the mapping. At worst the stored value can only be the difference between the sampled values themselves, and we expect each value to tend towards the interpolated average obtained by assuming a linear increase between samples.
\item[Bitvector]
Construct an uncompressed bitvector in which the number of 0s before the $k$th 1 is the offset from the first pointer (which is stored explicitly) to the $k$th. For example, given a sequence $M = [11, 16, 19, 21]$ and corresponding partial differences $D = [0, 5, 3, 2]$, we store the first pointer $M[0] = 11$ alongside the bitvector $$10000010001001$$ constructed as described above. Performing $\select(k) - k$ over this bitvector returns the number of 0s prior to the $k$th 1 and recovers the difference; in essence, a prefix sum over the partial differences where we remove the $k$ number of 1s from our calculation. Adding the stored $M[0]$ to this difference restores the original value. Given our example and $k = 3$, we have $$M[0] + \select(3) - 3 = 11 + 11 - 3 = 19 = M[2]$$ and we recover the correct value at $M[2]$ ($i = 2$ corresponds to the $k = 3$ bit due to 0-based array indexing).
\end{description}

To further optimize for practical input, consider an alternative to the wavelet tree suitable for small alphabets or when query support is needed for only a subset of characters. Where the wavelet tree performs \rank\ and \select\ over multiple tree levels, we could instead store full length uncompressed bitvectors in our blocks, one for each chosen character $c$ marking positions $i$ where $\BWT [i] = c$. For large alphabets, this approach is much larger than a wavelet tree representation; however, for genomic datasets which in practice support queries on few characters such as the nucleobases $\{A, C, G, T\}$, this alternative may be preferred. As this is the case in our experiments, we use this restricted alphabet trick to trade off space for increased speed in performing \rank\ and \select\ operations. A summary of the structure of our proposed practical approach is shown in Figure~\ref{fig:implementation}; an overview of the hierarchy of the proposed optimizations with respect to components of the data structure and the varying options which we have implemented.

\begin{figure}[ht]
\begin{minipage}{1\linewidth}
\includegraphics[width=\textwidth]{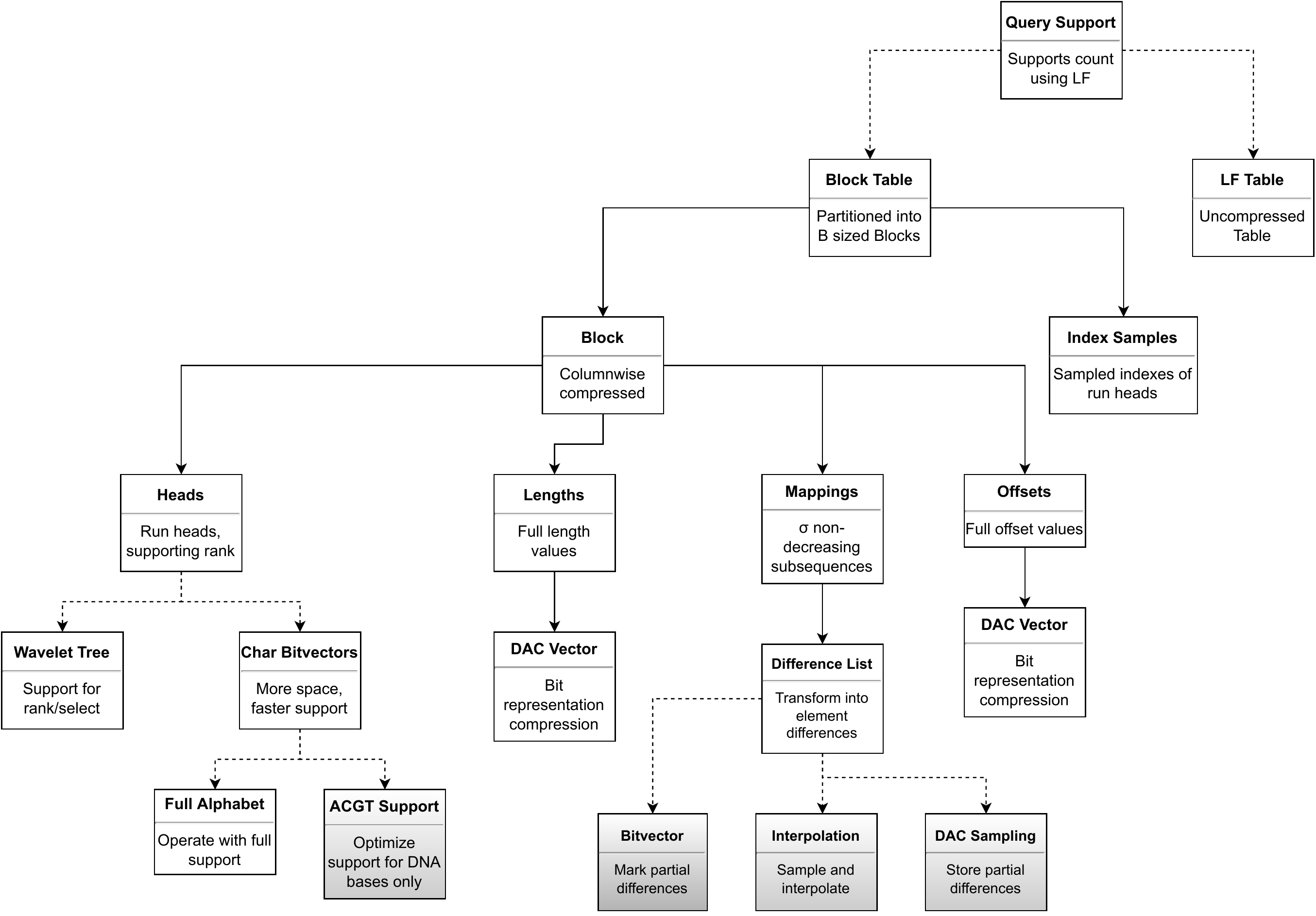}
\end{minipage}
\caption{Shows hierarchy of implementation, outlining different approaches and optimizations. Solid lines show required
components of parents given our work, where dotted lines denote multiple options being available. For example, the various methods to recover the mapping of a run head are shown as children of difference list. Shaded nodes show paths that are implemented for experiments in Section~\ref{sec:experiments}.
\label{fig:implementation}}
\end{figure}

\subsection{Scanning Complexity}

\label{sec:implementation:constructions}
We have not yet implemented the second part of Nishimoto and Tabei's result because we correctly expected their idea of table lookup (perhaps modified slightly) to be interesting and practical by itself. Over real world datasets (as discussed in Section~\ref{sec:discussion}), our typical sequential scan is very small; however, theoretically we use $\Omega (r)$-time in the worst case for such a scan for \LF\ .  In fact, there are strings for which the average time for a scan is $\Omega (r)$.  Suppose a string has $\BWT[0..n-1] = (bc)^{n/10}\cdot(a)^{4n/5}$ with $r = n / 5 + 1$ runs. By \LF\ properties we have $\frac{3n}{5}$ \LF\ steps which require scanning $r-1$ rows, as described in Figure~\ref{fig:amortized}. Similarly, we encounter $\Omega (n \cdot r)$-time for inversion, as we perform exactly $n$ possible \LF\ steps during a full retrieval of the original string.

\begin{figure}[ht]
  \begin{minipage}[c]{0.4\textwidth}
    \includegraphics[width=\textwidth]{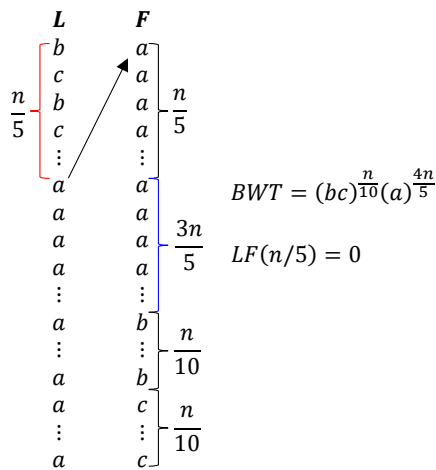}
  \end{minipage}\hfill
  \begin{minipage}[c]{0.55\textwidth}
    \caption{Visual representation of amortized analysis in Section~\ref{sec:implementation:constructions}. Notice that given a \BWT\ of this form, any character $a$ corresponding to run $k$ with $I[k] = n/5$ stores $\LF (n/5) = 0$ as its mapping. If the offset $d$ is greater than $n/5$, then the sequential scan must cross the boundaries of each run of $b$ or $c$, of which there are $n/5$ in total; since there is only one run of $a$, we scan $r-1$ entries, and perform this operation for $\frac{3n}{5}$ possible steps. Amortized over all possible \LF\ steps, we cannot avoid $\Omega (r)$ scans in the worst case.
    \label{fig:amortized}}
  \end{minipage}
\end{figure}

In practice, a very similar string can be produced which preserves a similar worst case. Consider a randomly generated binary string, for our purposes over the alphabet $\Sigma = \{b, c\}$. We then interleave the sequence with four consecutive $a$ characters between each of the original characters (resulting in $\frac{4n}{5}$ $a$ characters). The number of runs we expect in its \BWT\ cannot be much less than the number of runs in the original sequence, since the introduced $a$ characters are easily run-length compressed and such a technique would improve compression of any random sequence otherwise. The expected number of runs in a random binary string is half its length ($\frac{n}{10}$), also observed in practical experiments, and thus mapping to $a$ characters results in almost the same case as Figure~\ref{fig:amortized}.  To perform some practical bounding against scans without theoretical guarantees, we allow splitting of large runs by specifying the maximum acceptable run length to provide an alternative construction.

\subsection{Count Queries}
Standard FM-indexes are particularly good at counting queries, both in theory and in practice, and counting was also the first query supported quickly and in small space by RLBWT-based versions of the FM-index~\cite{MNSV10} (time- and space-efficient reporting was developed much later~\cite{GNP20}).  It seems appropriate, therefore, to test with counting queries our implementation of the first part of Nishimoto and Tabei's result. A counting query for pattern $S [0..m - 1]$ in text $T [0..n - 1]$ returns the number of occurrences of $S$ in $T$, by backward searching for $S$ and returning the length of the \BWT\ interval containing the characters preceding occurrences of $S$ in $T$.  We can implement a backward step using access to the string $R$ described in Section~\ref{sec:NT21}, up to 2 \rank\ queries and 2 \select\ queries on $R$, and 2 \LF\ queries.

Suppose the interval $\BWT [s..e]$ contains the characters preceding occurrences of $S [i + 1..m - 1]$ in $T$ and we know both the indices $j_s$ and $j_e$ of the runs containing $\BWT [s]$ and $\BWT [e]$, and the offsets of those characters in those runs.  We need not assume we know $s$ and $e$ themselves.  If $R [j_s] = S [i]$ then the \BWT\ interval containing the characters preceding occurrences of $S [i..m - 1]$ in $T$, starts at $\BWT [\LF (s)]$.  Otherwise, it starts at $\BWT [\LF (s')]$, where $s'$ is the first character in run 
$$j_{s'} = R.\select_{S [i]} (R.\rank_{S [i]} (j_s) + 1)\,.$$
Symmetrically, if $R [j_e] = S [i]$ then the interval ends at $\BWT [\LF (e)]$; otherwise, it ends at $\BWT [\LF (e')]$, where $e'$ is the last character in run
$$j_{e'} = R.\select_{S [i]} (R.\rank_{S [i]} (j_e))\,.$$ (If $j_{s'} > j_{e'}$ then $S [i..m - 1]$ does not occur in $T$.) These operations are all supported across our block compressed table, and the final interval positions in the \BWT\ can be computed using sampled run head positions to return the final count.

\section{Experiments}
\label{sec:experiments}

Our code was written in C++ and compiled with flags \texttt{-O3} \texttt{-DNDEBUG} \texttt{-funroll-loops} \texttt{-msse4.2} using data structures from \texttt{sdsl-lite}~\cite{GBMP14}.  We performed our experiments on a server with an Intel$^{\mbox{\tiny \textregistered}}$ Xeon$^{\mbox{\tiny \textregistered}}$ Silver 4214 CPU running at 2.20GHz with 32 cores and 100 GB of memory.  Our code is available at \texttt{\url{https://github.com/drnatebrown/r-index-f.git}}. Count query times were measured using Google Benchmark, and construction with the Unix {\texttt /usr/bin/time} command.

\subsection{Data Structures}
For our table lookup implementations, we partition into blocks of size $B=2^{20}$ and sample every 16th run position in the BWT. We compared the following data structures:
\begin{description}
\item[lookup-bv] table lookup with bitvector marking differences with 0s, recovered with \select\ described in Section~\ref{sec:implementation:opt}.
\item[lookup-int] table lookup with linear interpolation between sampled values described in Section~\ref{sec:implementation:opt} with sample rate 16.
\item[lookup-dac] table lookup with DAC sampling of differences described in Section~\ref{sec:implementation:opt} with sample rate 5.
\item[lookup-split2] table lookup with naive run splitting using \textit{lookup-bv} data structure described in Sections~\ref{sec:implementation:opt}, \ref{sec:implementation:constructions}. Runs larger than twice the average length $n/r$ are split.
\item[lookup-split5] table lookup identical to \textit{lookup-split2}, except runs larger than five times the average length $n/r$ are split.
\item[wt-fbb] fixed-block boosting wavelet tree of~\cite{fasterminuter} using default parameters; implementation at \texttt{\url{https://github.com/dominikkempa/faster-minuter}}.
\item[rle-string] run-length encoded string of the $r$-index~\cite{GNP20}; implementation based off \texttt{\url{https://github.com/nicolaprezza/r-index}}.
\item[RLCSA] the \BWT\ component\footnote{We build the data structure without suffix-array sampling.} of the run-length encoded compressed suffix array of~\cite{MNSV10} using default parameters; implementation at \texttt{\url{https://github.com/adamnovak/rlcsa}}.
\end{description}

\subsection{Datasets}
We tested our data structures for construction and query on 4 collections of 128, 256, 512 and 1000 haplotypes of chromosome 19 from the 1000 Genomes Project~\cite{1000genomes} ({\texttt {chr19}}) and 4 collections of 100k, 200k, 300k, 400k SARS-CoV2 genomes from the EBI's COVID-19 data portal~\cite{ebi21}\footnote{The complete list of accession numbers is reported in the repository.} ({\texttt {Sars-CoV2}}). Each set is a superset of the previous one. Table~\ref{tab:realdatasets} describes the lengths $n$ and ratio $n/r$ of the datasets.

\begin{table}[!ht]%
	 		\centering
		 		\begin{tabular}{llrrr}
		 			\hline
		 			Name & Description & {$N$} & {$n/10^6$} & {$n/r$} \\ 
		 			\hline
		 			{\tt chr19} & Human chromosome 19  & 128 &  ~7568.01 & 222.24 \\ 
 		 			{\tt chr19} & Human chromosome 19  & 256 &  ~15136.04 & 424.93 \\ 
  		 			{\tt chr19} & Human chromosome 19  & 512 &  ~30272.08 &  771.54\\ 
   		 			{\tt chr19} & Human chromosome 19  & 1,000 &  ~59125.12 & 1287.38 \\ 
		 			{\tt Sars-CoV2} & Sars-CoV2 genomes database & 100,000  & ~2979.01 & 881.16 \\ 
 		 			{\tt Sars-CoV2} & Sars-CoV2 genomes database & 200,000  & ~5958.35 & 977.19 \\ 
  		 			{\tt Sars-CoV2} & Sars-CoV2 genomes database & 300,000  & ~8944.37 & 1178.00 \\ 
   		 			{\tt Sars-CoV2} & Sars-CoV2 genomes database & 400,000  & ~11931.17 & 1328.92 \\ 
		 			\hline
		 		\end{tabular}
		 		\bigskip
	 		\caption{Table of the different datasets.  In column 1 and 2 we report the name and description of the datasets, in column 3 we report the number of sequences in the collection, in column 4 we report the length of the file, and in column 5 the ratio of the length to the number of runs in the \BWT.  \label{tab:realdatasets}}
	 	\end{table}

\subsection{Construction}
In Figure~\ref{fig:construction} we report the time and memory for construction of the data structures for the {\texttt {chr19}} and {\texttt{Sars-CoV2}} datasets. {\texttt {RLCSA}} is omitted, since it is the only data structure not built using prefix free parsing (PFP)~\cite{PFP19}, and its construction time far exceeded the other methods.

\begin{figure}[!htb]
    \centering
    \begin{minipage}{.5\textwidth}
            \centering
            \includegraphics[width=\textwidth]{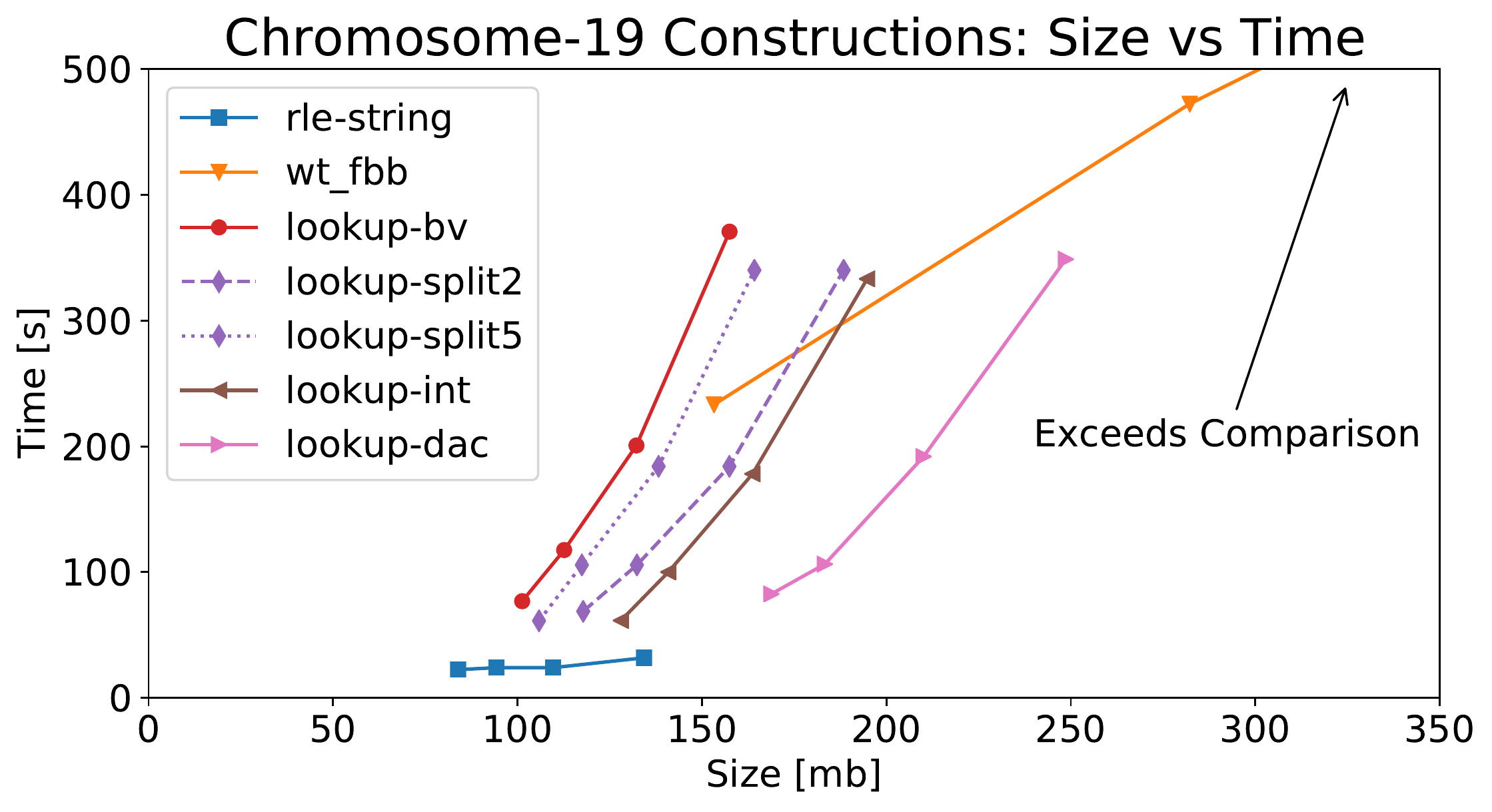}
        \end{minipage}%
        \begin{minipage}{0.5\textwidth}
            \centering
            \includegraphics[width=\textwidth]{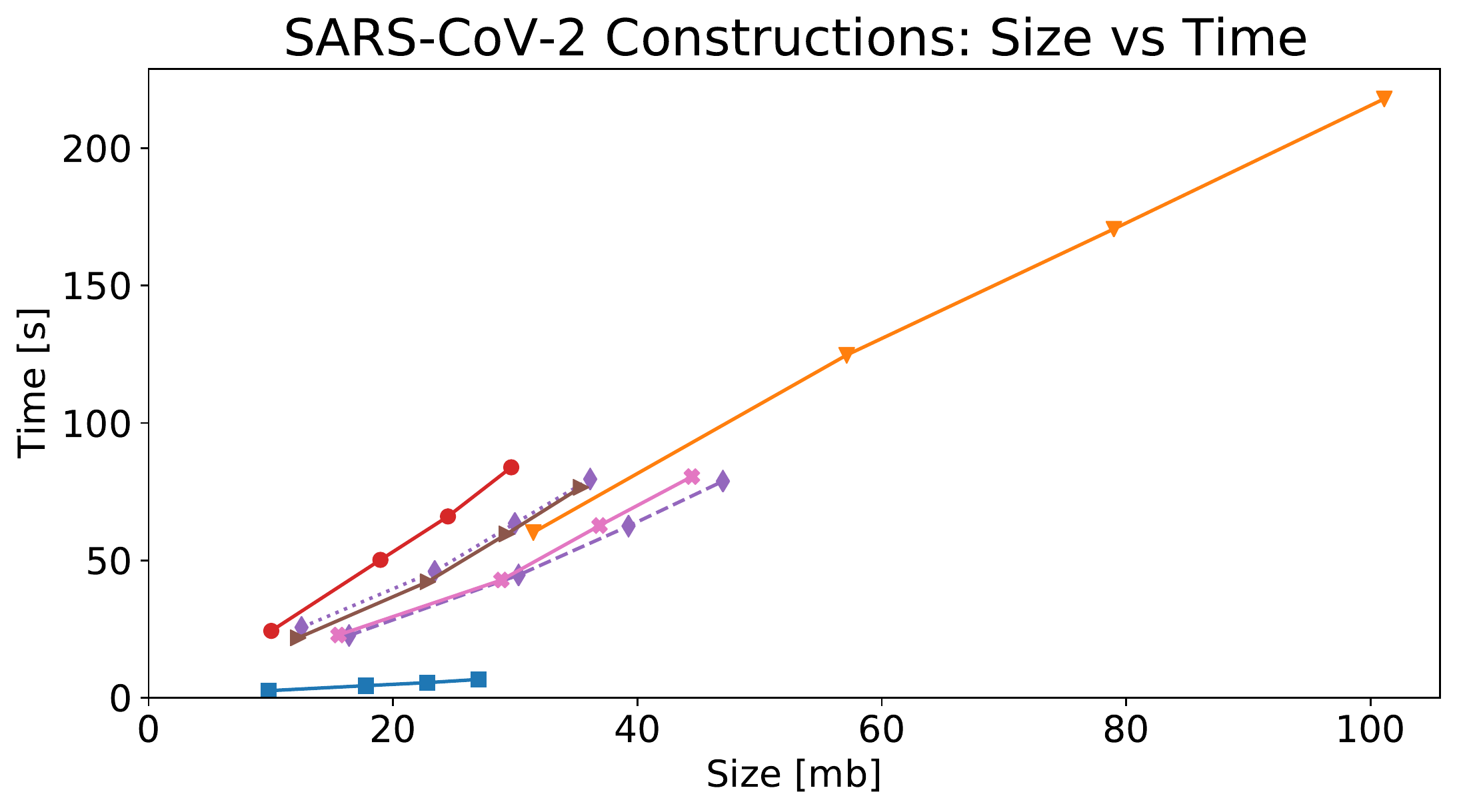}
        \end{minipage}\textbf{}
        \caption{Construction for {\texttt {chr19}} of 128, 256, 512 and 1000 copies (left) and {\texttt {Sars-CoV2}} of 100k, 200k, 300k and 400k copies (right). Copies increase for an instance plotted left to right. For {\texttt {chr19}} we partially omit {\texttt {wt-fbb}} for being magnitudes larger than other values (approximately 4 times slower and larger than {\texttt {lookup-bv}} for 512 copies and similarly 5 times slower and 7 times larger for 1000).
        \label{fig:construction}}
\end{figure}

\subsection{Query}
To query the data structures we performed counting queries for 10000 randomly chosen substrings each of length 10, 100, 1000 and 10000. In Figure \ref{fig:experiments_chr19} and \ref{fig:experiments_sars} we report the time and memory for querying of the data structures for the {\texttt {chr19}} and {\texttt{Sars-CoV2}} datasets respectively.

\begin{figure}[!p]
\begin{minipage}{1\linewidth}
\includegraphics[width=\textwidth]{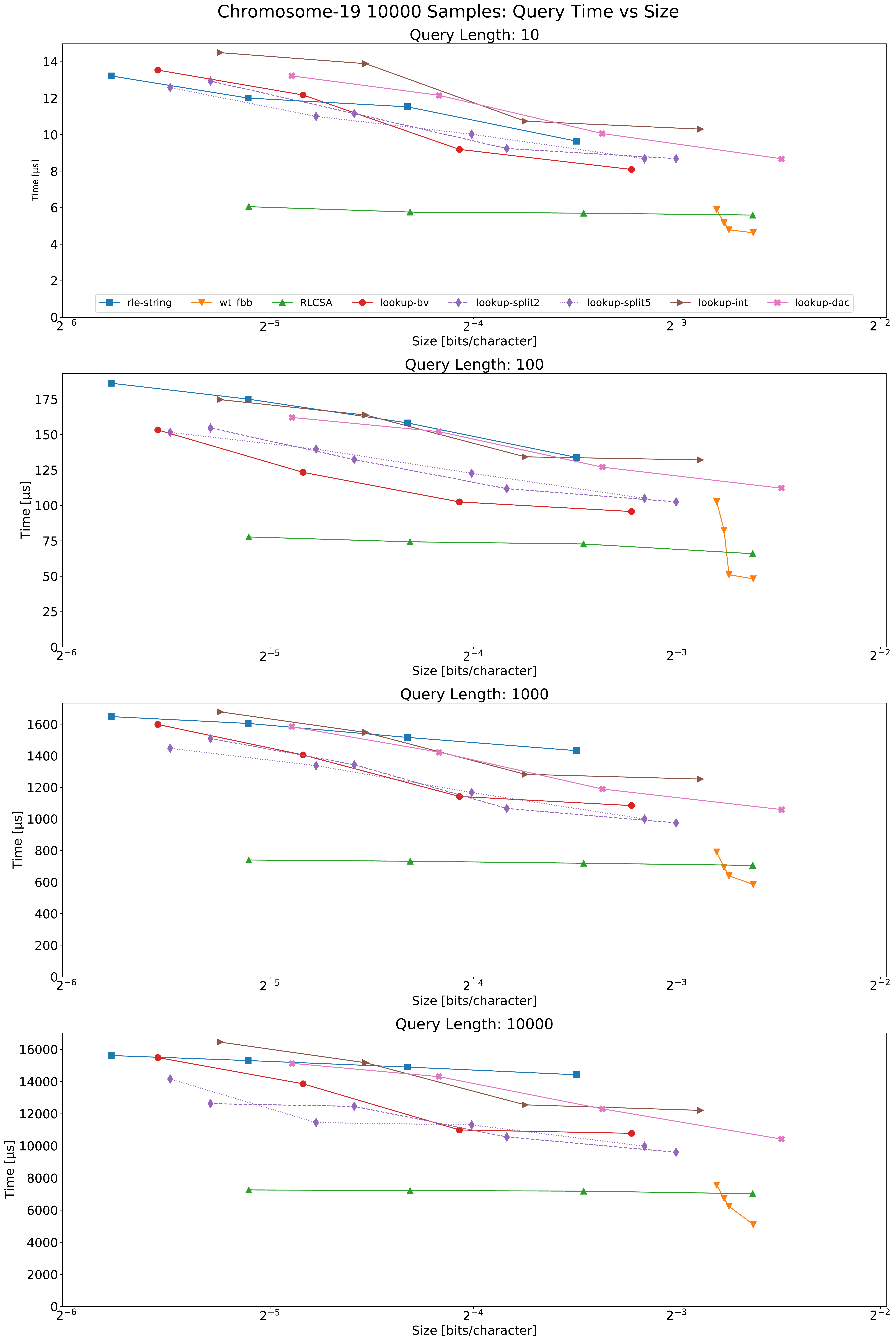}
\end{minipage}
	\caption{The time per query to count the occurrences of 128, 256, 512 and 1000 copies of {\texttt{chr19}} for 10000 randomly-chosen substrings of length 10, 100, 1000 and 10000 each. Copies for a single line are read from largest number of copies to smallest, left to right. The x axis is logarithmically scaled, motivated by doubling the number of copies across examples.
	\label{fig:experiments_chr19}}
\end{figure}%

\begin{figure}[!p]
\begin{minipage}{1\linewidth}
\includegraphics[width=\textwidth]{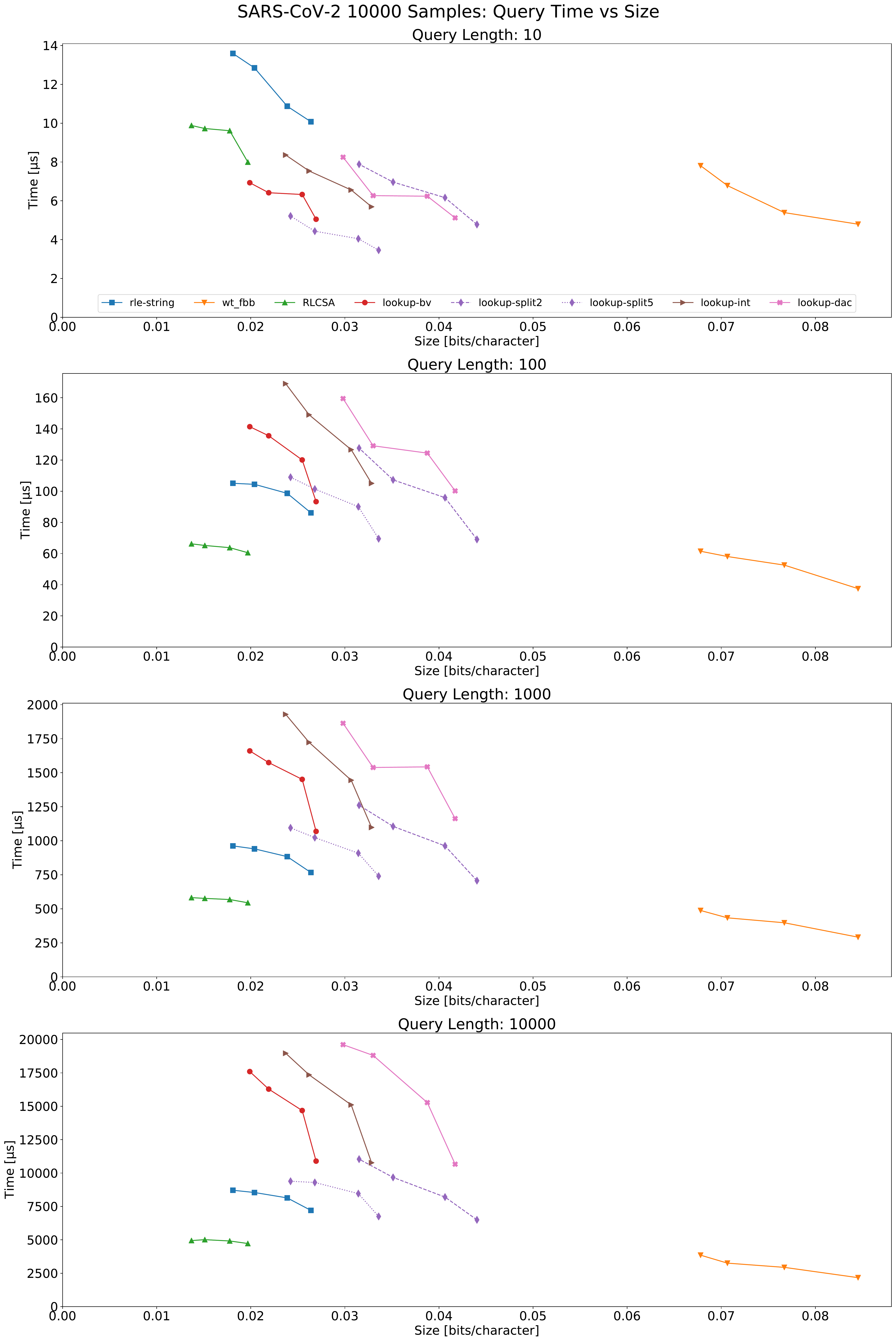}
\end{minipage}
	\caption{The time per query to count the occurrences of 100k, 200k, 300k and 400k {\texttt{Sars-CoV2}} copies for 10000 randomly-chosen substrings. Results are given for queries of length 10, 100, 1000 and 10000. Copies for a single line are read from largest number of copies to smallest, left to right.
	\label{fig:experiments_sars}}
\end{figure}%

\section{Discussion}\label{sec:discussion}

With respect to our table lookup implementations, {\texttt {lookup-bv}} and its variants ({\texttt {lookup-split2}}, {\texttt {lookup-split5}}) perform better than the alternatives ({\texttt {lookup-int}}, {\texttt {lookup-dac}}) a majority of the time across all queries, while being smaller in space. For query lengths greater than 10 on {\texttt {chr19}}, these approaches are faster than {\texttt {rle-string}} but slightly larger, while slower than {\texttt {RLCSA}} but smaller in size; we occupy a time/space trade-off position between these values. This is while also being much smaller than {\texttt {wt-fbb}} whose space makes it an outlier despite best speeds for various queries. 

On {\texttt {Sars-CoV2}}, our implementations perform well on queries of length 10, with {\texttt {lookup-split2}} the fastest implementation and other approaches competitive in both time/space. For query lengths greater than 10, the non-splitting approaches ({\texttt {lookup-bv}}, {\texttt {lookup-int}}, {\texttt {lookup-dac}}) perform the worst across data structures with respect to speed. With splitting approaches, we are comparable to {\texttt {rle-string}} in time but worse in space. Although again an outlier in space, {\texttt {wt-fbb}} performs fastest, with {\texttt {RLCSA}} occuping the least space with comparable speed to {\texttt {wt-fbb}}.

In terms of size/construction, we perform worse than {\texttt rle-string} across all data, but are highly competitive for {\texttt {lookup-bv}}'s space despite slower construction. For our implementations, {\texttt {lookup-bv}} is the definitive choice across results in regard to both space and construction time. When compared to {\texttt {RLCSA}}, despite being more space-efficient on {\texttt {chr19}} across {\texttt {lookup-bv}} approaches, we cannot compete on {\texttt {Sars-CoV2}} where it is a clear winner across all data structures. This motivates applying table lookup to also speed up {\texttt {RLCSA}}; however, we note adding support for $\phi$ and $\phi^{- 1}$ (thus, supporting locate) to {\texttt {RLCSA}} is still an open problem.

\begin{figure}[!htb]
\begin{minipage}{1\textwidth}
\includegraphics[width=\textwidth]{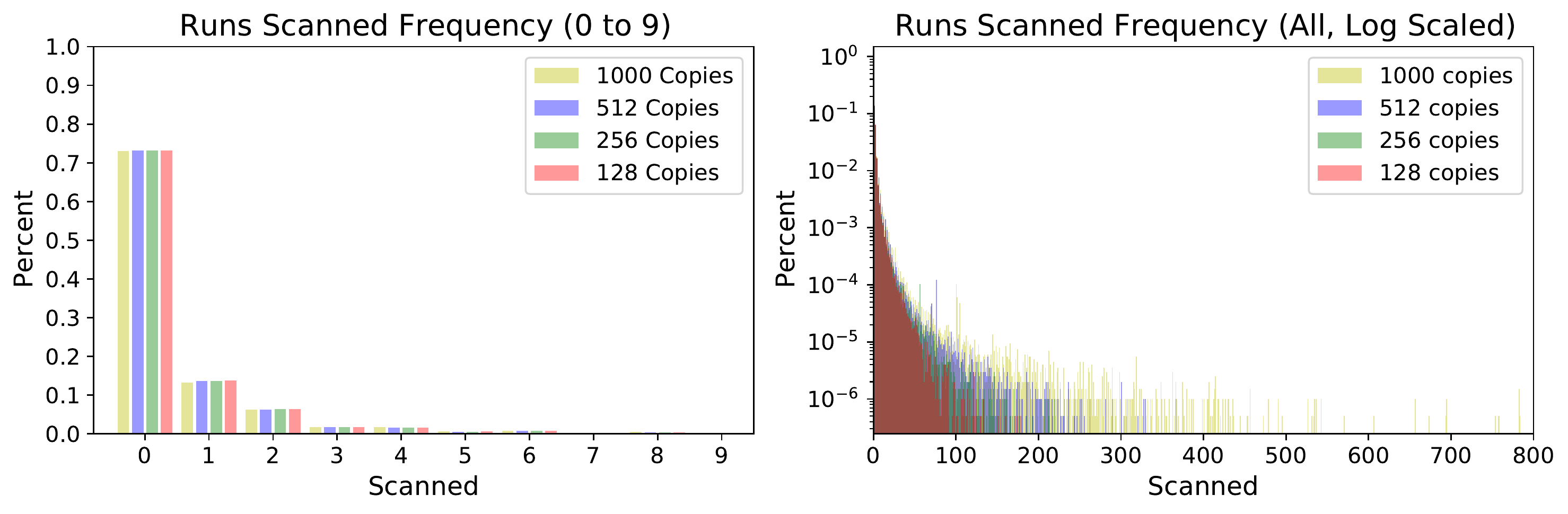}
        \end{minipage}%
        \caption{Frequencies in percentage of runs scanned for any \LF\ step across 10000 count queries of length 100 for 100, 200, 512 and 1000 copies of {\texttt {chr19}}. Plot on left is restricted only to steps scanning 0 to 9 runs; plot on right shows all scans, log scaled since the frequency of scans decreases quickly for large values.
        \label{fig:scans}}
\end{figure}

With regard to our splitting approaches, they are superior to {\texttt {lookup-bv}} for long query lengths and as $n/r$ rises. To examine the cause in terms of $n/r$ and growing text collections, we examine the number of sequential scans required across \LF\ steps during count queries of length 100 for {\texttt {chr19}} in Figure~\ref{fig:scans}. Although the distribution is similar across all copies near zero, with a majority requiring no sequential scan and most of the rest scanning very few, worst cases become both more prevalent and longer as the number of copies and $n/r$ grows. This gives further insight into the success of the splitting approaches in these instances, as bounding the maximum runs also bounds worst case sequential scans. We find this result intriguing with respect to Theorem~\ref{thm:NT21} when $n/r$ or the worst case number of scans is high. Concentrating on Nishimoto and Tabei's first result, {\texttt {lookup-bv}} performs competitively in space/time for low $n/r$ with naive run splitting as a practical alternative otherwise in our observed experiments.

\bibliography{references.bib}

\appendix

\end{document}